\documentclass[runningheads]{llncs}

\usepackage[T1]{fontenc}
\usepackage[utf8]{inputenc}
\usepackage{amsmath,amssymb}
\usepackage{booktabs,tabularx,array}
\usepackage{graphicx}
\usepackage{float}
\usepackage{placeins}
\usepackage{xcolor}
\usepackage{tikz}
\usetikzlibrary{arrows.meta,positioning,fit,calc,shapes.geometric}
\usepackage{algorithm}
\usepackage{algpseudocode}
\usepackage{url}
\usepackage[hidelinks]{hyperref}

\hypersetup{
  colorlinks=true,
  linkcolor=blue!55!black,
  citecolor=blue!55!black,
  urlcolor=blue!55!black
}

\newcommand{\QuPort}{\textsc{QuPort}}

\newcommand{\cut}{\operatorname{cut}}
\newcommand{\load}{\operatorname{load}}

\begin{document}

\title{QuPort: Topology-, Port-, and Congestion-Aware Compilation for Modular Multi-QPU Quantum Systems}
\titlerunning{QuPort Multi-QPU Compilation}
\author{
Soumyadip Sarkar\inst{1} \and
Subhasree Bhattacharjee\inst{2}
}
\institute{
Independent Researcher
\and
Department of Computer Application, Narula Institute of Technology, Kolkata, India
}
\authorrunning{S. Sarkar \and S. Bhattacharjee}
\maketitle

\begin{abstract}
Modular quantum processors require a compiler to reason about two resources at the same time: local device connectivity and communication across QPUs. A mapping that is acceptable on a single coupling graph may be unsuitable for a modular machine if it creates excessive cross-QPU traffic, concentrates that traffic on a small number of interconnect links, or assigns many boundary qubits to a QPU with few communication ports. This paper presents \QuPort, a Python and Qiskit-based compilation framework that studies this setting through an explicit three-level model: a weighted logical interaction graph, a directed physical coupling map, and an undirected QPU-level interconnect graph. The main partitioning method, TPCCAP, optimizes the implemented objective formed by weighted cut distance, communication-port overflow, and routed link-load congestion. The framework also includes heavy-edge clustering, balanced greedy partitioning, simulated-annealing refinement, communication-port-aware layout, extraction of remote two-qubit operations, local-only routing of per-QPU circuits, and topology-aware schedule estimation. The model is a compiler-level abstraction. It does not claim a calibrated hardware runtime or an implementation of a physical remote-gate protocol.
\let\thefootnote\relax\footnote{The code can be found at: \url{https://github.com/neuralsorcerer/quport}}
\keywords{Distributed quantum computing \and Modular quantum architecture \and Quantum compilation \and Qubit mapping \and Graph partitioning \and Qiskit}
\end{abstract}

\section{Introduction}

Quantum compilation translates an abstract circuit into instructions that respect the basis gates and connectivity of a target device. On current monolithic devices, this problem is usually expressed through basis translation, initial layout, and routing over a coupling map. Qiskit's transpiler follows this model, and its \texttt{CouplingMap} represents fixed directed couplings between physical qubits \cite{qiskit_transpile_docs,qiskit_coupling_docs}. If two logical qubits that must interact are not adjacent under the target connectivity, the routing stage may introduce SWAPs or related transformations.

A modular quantum processor changes the meaning of a nonlocal interaction. If two operands belong to different QPUs, the operation is not simply a longer local gate on the same device. Its implementation depends on the interconnect and the physical platform. Proposed and demonstrated distributed systems may use photonic links, entanglement generation, measurement, classical communication, state transfer, or gate teleportation. Recent trapped-ion and superconducting modular experiments support the relevance of this direction, but they also indicate that inter-module communication remains a scarce resource rather than a free extension of local connectivity \cite{main2025distributed,mollenhauer2025superconducting}.

\QuPort{} addresses the compiler problem that appears before a platform-specific remote-gate implementation is chosen. Given a circuit and a modular architecture description, it extracts two-qubit interaction weights, assigns logical qubits to QPUs, selects communication-port placements, and then follows either a global or a distributed compilation path. In global mode, a single directed physical coupling map is passed to Qiskit. In distributed mode, cross-QPU two-qubit operations are extracted as explicit remote events, while each local circuit is routed only on the intra-QPU coupling map. This distinction prevents inter-QPU communication from being hidden inside ordinary monolithic routing.

The paper makes three technical points. First, modular compilation benefits from separating the logical interaction graph, the physical coupling map, and the QPU interconnect graph. Second, the TPCCAP objective implemented in \QuPort{} captures three compiler-level sources of modular cost: QPU distance, communication-port pressure, and routed congestion. Third, the distributed compilation path produces an intermediate representation in which local circuits and remote events are separated, enabling later replacement of the abstract remote-event layer by a hardware-specific protocol.

\begin{figure}[!htbp]
\centering
\resizebox{0.98\textwidth}{!}{%
\begin{tikzpicture}[
  node distance=8mm and 9mm,
  stage/.style={draw, rounded corners, align=center, minimum height=8mm, minimum width=25mm, font=\scriptsize},
  decision/.style={draw, diamond, aspect=2.1, align=center, inner sep=1pt, font=\scriptsize},
  arr/.style={-{Latex[length=2mm]}, thick}
]
\node[stage] (input) {logical\\circuit};
\node[stage, right=of input] (basis) {basis\\translation};
\node[stage, right=of basis] (weights) {two-qubit\\weights};
\node[stage, right=of weights] (partition) {QPU\\partition};
\node[stage, below=of partition] (layout) {physical\\layout};
\node[decision, below=of layout] (mode) {compile\\mode};
\node[stage, left=18mm of mode] (global) {global coupling\\map routing};
\node[stage, right=18mm of mode] (distributed) {remote extraction\\and local routing};
\node[stage, below=of global] (metrics) {mapped circuit\\metrics};
\node[stage, below=of distributed] (schedule) {remote rounds\\and schedule};
\draw[arr] (input) -- (basis);
\draw[arr] (basis) -- (weights);
\draw[arr] (weights) -- (partition);
\draw[arr] (partition) -- (layout);
\draw[arr] (layout) -- (mode);
\draw[arr] (mode) -- node[above, font=\tiny] {global} (global);
\draw[arr] (mode) -- node[above, font=\tiny] {distributed} (distributed);
\draw[arr] (global) -- (metrics);
\draw[arr] (distributed) -- (schedule);
\end{tikzpicture}%
}
\caption{Compilation paths represented in \QuPort. The global path invokes Qiskit over one directed physical coupling map. The distributed path preserves cross-QPU gates as remote events and routes only inside each QPU.}
\label{fig:flow}
\end{figure}

\section{Related Work}

\paragraph{Single-device routing.}
Qubit routing for NISQ processors is often formulated as a layout and SWAP-insertion problem over a limited coupling graph. SABRE introduced a bidirectional heuristic for initial mapping and routing that remains influential in practical transpilation workflows \cite{li2019sabre}. Retargetable compilers such as \textsc{tket} combine circuit rewriting and hardware-aware routing for heterogeneous NISQ devices \cite{sivarajah2020tket}. \QuPort{} uses Qiskit as the underlying circuit and transpilation ecosystem, but it adds an explicit QPU-level interconnect model that is separate from the physical coupling map.

\paragraph{Distributed quantum compilation.}
Distributed quantum computation has been studied as a path toward scaling beyond a single processor \cite{cirac1999distributed,monroe2014modular}. Compiler work in this area treats nonlocal gates as resources that must be assigned, exposed, and scheduled rather than ordinary nearest-neighbor gates. Ferrari et al. studied compiler design for distributed quantum computing and later presented a modular compilation framework that includes network-aware considerations \cite{ferrari2021compiler,ferrari2023modular}. Davarzani et al. considered hierarchical construction of distributed quantum systems with attention to inter-subsystem communication \cite{davarzani2022hierarchical}. Recent survey work also describes distributed quantum computing as a networked model in which computation and communication resources must be considered together \cite{caleffi2024survey}. \QuPort{} is aligned with these goals, while using a compact Python/Qiskit implementation centered on partitioning, port placement, remote-event extraction, and schedule estimation.

\paragraph{Classical graph partitioning.}
The logical assignment problem resembles weighted graph partitioning with capacity constraints. Multilevel partitioning methods, such as those of Karypis and Kumar, provide strong general-purpose methods for irregular graphs \cite{karypis1998multilevel}. \QuPort{} does not depend on an external partitioner. It implements transparent heuristics that are easy to inspect and modify: heavy-edge clustering, balanced greedy placement, TPCCAP, and TPCCAP-SA. This design supports reproducible compiler studies, but it should not be interpreted as a claim that these heuristics dominate specialized graph or hypergraph partitioning packages.

\section{System Model}

Let the input circuit after basis translation be
\begin{equation}
C=(Q_L,\mathcal{G}),\qquad Q_L=\{0,1,\ldots,n-1\},
\end{equation}
where \(Q_L\) is the logical-qubit set and \(\mathcal{G}\) is the ordered gate list. \QuPort{} extracts a weighted undirected logical interaction graph
\begin{equation}
G_L=(V_L,E_L,w),\qquad V_L=Q_L.
\end{equation}
For every two-qubit instruction on logical qubits \(i\) and \(j\), the canonical edge weight is incremented:
\begin{equation}
w_{ij}\leftarrow w_{ij}+1,\qquad i<j.
\label{eq:edge-count}
\end{equation}
For temporal weighting, the \(t\)-th two-qubit operation contributes \(\gamma^t\), where \(\gamma\in(0,1]\). Thus
\begin{equation}
W_{ij}=\sum_{t\in T_{ij}}\gamma^t,
\end{equation}
where \(T_{ij}\) is the set of two-qubit interaction times for the pair \((i,j)\). When \(\gamma=1\), this reduces to the ordinary count in Eq.~\eqref{eq:edge-count}.\\

The modular target has \(N\) QPUs. In the implemented configuration, each QPU has \(C\) compute qubits and \(P\) communication qubits. The block size is
\begin{equation}
B=C+P,
\end{equation}
and QPU \(q\) owns physical indices
\begin{equation}
\{qB,qB+1,\ldots,qB+B-1\}.
\end{equation}
The implemented physical-to-QPU map is
\begin{equation}
\operatorname{qpu}(p)=\left\lfloor \frac{p}{B}\right\rfloor.
\label{eq:qpu-of-phys}
\end{equation}
The local physical graph inside a QPU may be a clique, line, ring, or two-dimensional grid. Symmetric physical links are encoded as two directed edges because Qiskit's coupling map is directed \cite{qiskit_coupling_docs}.\\

Inter-QPU connectivity is modeled by an undirected graph
\begin{equation}
G_Q=(V_Q,E_Q),\qquad V_Q=\{0,\ldots,N-1\}.
\end{equation}
\QuPort{} includes switch, mesh, ring, degree-bounded, Clos-style, and fat-tree-style QPU graph abstractions. These are compiler interconnect abstractions, not calibrated descriptions of a particular hardware installation.\\

A logical-to-QPU partition is a map
\begin{equation}
\pi:Q_L\rightarrow V_Q.
\end{equation}
A physical layout is an injective map
\begin{equation}
\ell:Q_L\rightarrow Q_P,
\end{equation}
where \(Q_P\) is the set of physical qubits. Feasibility requires
\begin{equation}
\operatorname{qpu}(\ell(i))=\pi(i)
\end{equation}
for each logical qubit \(i\). A two-qubit operation on \((i,j)\) is local if \(\pi(i)=\pi(j)\) and remote otherwise.

\begin{figure}[!htbp]
\centering
\resizebox{0.96\textwidth}{!}{%
\begin{tikzpicture}[
  qubit/.style={circle, draw, minimum size=5.8mm, inner sep=0pt, font=\scriptsize},
  box/.style={draw, rounded corners, minimum width=31mm, minimum height=25mm, align=center, font=\scriptsize},
  edge/.style={thick},
  arr/.style={-{Latex[length=1.8mm]}, thick}
]
\node[font=\small] at (0,2.05) {logical interaction graph};
\node[qubit] (l0) at (-0.9,1.0) {$0$};
\node[qubit] (l1) at (0.9,1.0) {$1$};
\node[qubit] (l2) at (-0.9,-0.2) {$2$};
\node[qubit] (l3) at (0.9,-0.2) {$3$};
\draw[edge] (l0) -- node[above, font=\tiny] {$w_{01}$} (l1);
\draw[edge] (l0) -- (l2);
\draw[edge] (l1) -- node[right, font=\tiny] {$w_{13}$} (l3);
\draw[edge, dashed] (l2) -- (l3);

\node[font=\small] at (5.1,2.05) {physical coupling map};
\node[box] (qpa) at (3.9,0.35) {};
\node[box] (qpb) at (6.3,0.35) {};
\node[font=\scriptsize] at (3.9,1.25) {QPU $a$};
\node[font=\scriptsize] at (6.3,1.25) {QPU $b$};
\node[qubit] (pa) at (3.25,0.65) {};
\node[qubit] (pb) at (3.75,0.65) {};
\node[qubit] (pc) at (4.25,0.65) {};
\node[qubit, fill=gray!15] (pca) at (3.75,-0.35) {$c$};
\node[qubit] (pd) at (5.65,0.65) {};
\node[qubit] (pe) at (6.15,0.65) {};
\node[qubit] (pf) at (6.65,0.65) {};
\node[qubit, fill=gray!15] (pcb) at (6.15,-0.35) {$c$};
\draw[edge] (pa) -- (pb) -- (pc);
\draw[edge] (pb) -- (pca);
\draw[edge] (pd) -- (pe) -- (pf);
\draw[edge] (pe) -- (pcb);
\draw[arr] (pca) -- (pcb);
\draw[arr] (pcb) to[bend left=16] (pca);
\node[font=\tiny] at (5.05,-0.95) {communication qubits marked $c$};

\node[font=\small] at (9.4,2.05) {QPU interconnect graph};
\node[qubit] (q0) at (8.6,0.9) {$0$};
\node[qubit] (q1) at (10.2,0.9) {$1$};
\node[qubit] (q2) at (8.6,-0.4) {$2$};
\node[qubit] (q3) at (10.2,-0.4) {$3$};
\draw[edge] (q0) -- (q1);
\draw[edge] (q0) -- (q2);
\draw[edge] (q1) -- (q3);
\draw[edge] (q2) -- (q3);
\draw[edge, dashed] (q0) -- (q3);
\end{tikzpicture}%
}
\caption{Three graph views used by \QuPort. The logical graph stores circuit interaction weights, the physical coupling map is the directed graph passed to Qiskit, and the QPU graph is used for hop distance, traffic, congestion, and scheduling. The figure is illustrative.}
\label{fig:three-graphs}
\end{figure}

\section{Partitioning Objective}

The basic remote-interaction cut of a partition is
\begin{equation}
\cut(\pi)=\sum_{(i,j)\in E_L}w_{ij}\,\mathbf{1}[\pi(i)\neq\pi(j)].
\end{equation}
Cut weight alone does not distinguish a remote interaction across one QPU-network hop from one across several hops. It also does not account for communication-port scarcity or traffic concentration. \QuPort{} therefore computes a symmetric QPU traffic matrix
\begin{equation}
T_{ab}=\sum_{(i,j)\in E_L}w_{ij}\,\mathbf{1}[\{\pi(i),\pi(j)\}=\{a,b\}],\qquad a\neq b,
\end{equation}
with \(T_{aa}=0\). Let \(d(a,b)\) be the shortest-path distance in \(G_Q\), and let \(b_q\) be the number of boundary logical qubits assigned to QPU \(q\), where a boundary qubit has at least one interaction with a logical qubit assigned to another QPU.\\

The implemented TPCCAP objective is
\begin{equation}
J(\pi)=
\alpha \sum_{(i,j)\in E_L,\pi(i)\neq\pi(j)} w_{ij}d(\pi(i),\pi(j))
+\beta \sum_{q\in V_Q}\max(0,b_q-P)^2
+\eta \sum_{e\in E_Q}L_e^2.
\label{eq:tpccap-objective}
\end{equation}
The first term is weighted cut distance. The second term penalizes boundary-qubit count beyond the number of communication ports. The third term penalizes routed congestion, where \(L_e\) is the traffic load assigned to QPU-network edge \(e\). Traffic can be routed on one shortest path or split across equal-cost shortest paths. If traffic exists between disconnected QPU pairs, the implementation assigns a large penalty rather than treating the traffic as routable.\\

Equation~\eqref{eq:tpccap-objective} is not a physical fidelity model. It is the compiler objective implemented for architecture-aware partitioning. It does not include calibrated gate error, crosstalk, memory lifetime, entanglement fidelity, or queueing delay from a hardware control stack.

\begin{figure}[htbp]
\centering
\begin{tikzpicture}[
  term/.style={draw, rounded corners, align=center, minimum width=33mm, minimum height=12mm, font=\scriptsize},
  arr/.style={-{Latex[length=2mm]}, thick}
]
\node[term] (dist) at (0,0) {weighted cut\\distance};
\node[term] (port) at (4.2,0) {communication-port\\overflow};
\node[term] (cong) at (8.4,0) {routed link-load\\congestion};
\node[term, minimum width=42mm] (obj) at (4.2,-1.8) {TPCCAP objective\\$J(\pi)$};
\draw[arr] (dist.south) -- (obj.north west);
\draw[arr] (port.south) -- (obj.north);
\draw[arr] (cong.south) -- (obj.north east);
\end{tikzpicture}
\caption{Terms optimized by TPCCAP. Each term is computed from the logical interaction graph, the QPU partition, and the QPU-level interconnect graph.}
\label{fig:tpccap-terms}
\end{figure}

\section{Algorithms}

This section describes the algorithms present in \QuPort. The descriptions use mathematical names for clarity, but each algorithm corresponds to a concrete component of the implementation at the referenced repository snapshot \cite{quport_repo}.

\subsection{Heavy-edge clustering}

Heavy-edge clustering constructs capacity-bounded clusters before assigning them to QPUs. It sorts interaction edges by decreasing weight and merges the components incident to an edge when the merged component size remains at most \(K=C+P\). Clusters are then placed by first-fit decreasing bin packing. The method is simple and interpretable: high-weight logical pairs are kept together whenever capacity allows.

\begin{algorithm}[htbp]
\caption{Heavy-edge clustering partition}
\label{alg:heavy}
\begin{algorithmic}[1]
\Require Logical qubits \(Q_L\), edge weights \(w\), QPU count \(N\), capacity \(K\)
\Ensure Partition \(\pi\)
\State Initialize a disjoint-set structure with singleton components.
\State Sort edges \((i,j)\in E_L\) by decreasing \(w_{ij}\), using deterministic tie breaks.
\For{each edge \((i,j)\) in sorted order}
  \State Let \(A\) and \(B\) be the current components containing \(i\) and \(j\).
  \If{\(A\neq B\) and \(|A|+|B|\leq K\)}
    \State Merge \(A\) and \(B\).
  \EndIf
\EndFor
\State Sort components by decreasing size.
\State Place each component into the first QPU with sufficient remaining capacity.
\State Place any unplaced singleton qubits into remaining capacity.
\end{algorithmic}
\end{algorithm}

\subsection{Balanced greedy partitioning}

The balanced greedy strategy assigns logical qubits one at a time in descending weighted degree. For a candidate placement of logical qubit \(v\) on QPU \(q\), the score is
\begin{equation}
S(v,q)=\sum_{u:\pi(u)=q}w_{uv}-\lambda \frac{\load(q)}{K}.
\end{equation}
The first term rewards placing \(v\) with already placed neighbors. The second term discourages early overloading of a QPU. After the greedy assignment, the implementation performs local move refinement: a qubit may move to another non-full QPU when the move decreases cut weight.

\begin{algorithm}[htbp]
\caption{Balanced greedy partition with local refinement}
\label{alg:balanced}
\begin{algorithmic}[1]
\Require Edge weights \(w\), QPU count \(N\), capacity \(K\), balance weight \(\lambda\)
\Ensure Partition \(\pi\)
\State Order logical qubits by descending weighted degree.
\For{each logical qubit \(v\) in this order}
  \For{each QPU \(q\) with \(\load(q)<K\)}
    \State Compute \(S(v,q)=\sum_{u:\pi(u)=q}w_{uv}-\lambda\load(q)/K\).
  \EndFor
  \State Assign \(v\) to the feasible QPU with maximum score.
\EndFor
\Repeat
  \State Scan logical qubits in a randomized order.
  \State Move a qubit only if the move reduces cut weight and preserves capacity.
\Until{no improving move is found or the pass limit is reached}
\end{algorithmic}
\end{algorithm}

\subsection{TPCCAP local search}

TPCCAP starts from the balanced greedy partition and optimizes the objective in Eq.~\eqref{eq:tpccap-objective}. For a logical qubit \(v\), the implementation constructs a small candidate set of destination QPUs based on affinity to \(v\)'s neighbors, together with the current QPU. A move is tested by temporarily changing the assignment, recomputing the objective, and accepting the best reducing move that preserves capacity.

\begin{algorithm}[htbp]
\caption{TPCCAP local search}
\label{alg:tpccap}
\begin{algorithmic}[1]
\Require Initial feasible partition \(\pi\), weights \(w\), QPU shortest paths, port count \(P\)
\Ensure Improved partition \(\pi\)
\State Compute current objective \(J(\pi)\).
\Repeat
  \State Set \(\mathrm{changed}\leftarrow\mathrm{false}\).
  \For{each logical qubit \(v\) in randomized order}
    \State Build candidate QPUs from neighbor-affinity scores and include \(\pi(v)\).
    \State Let \(q^*=\pi(v)\) and \(J^*=J(\pi)\).
    \For{each candidate QPU \(q\neq\pi(v)\) with remaining capacity}
      \State Temporarily move \(v\) to \(q\) and evaluate \(J\).
      \If{the objective is lower than \(J^*\)}
        \State Store \(q\) as the best destination.
      \EndIf
    \EndFor
    \If{a better destination was found}
      \State Apply the move and update \(J(\pi)\).
      \State Set \(\mathrm{changed}\leftarrow\mathrm{true}\).
    \EndIf
  \EndFor
\Until{\(\mathrm{changed}=\mathrm{false}\) or the pass limit is reached}
\end{algorithmic}
\end{algorithm}

\subsection{TPCCAP-SA}

TPCCAP-SA adds a simulated-annealing stage after TPCCAP. A candidate move from \(\pi\) to \(\pi'\) has objective difference
\begin{equation}
\Delta=J(\pi')-J(\pi).
\end{equation}
The move is always accepted when \(\Delta\leq0\). Otherwise, it may be accepted with probability
\begin{equation}
P_{\mathrm{accept}}=\exp\left(-\frac{\Delta}{T}\right),
\end{equation}
where \(T\) is the current temperature. This permits occasional uphill moves and can escape local minima. The method remains heuristic and does not guarantee a globally optimal partition.

\begin{algorithm}[htbp]
\caption{TPCCAP-SA refinement}
\label{alg:sa}
\begin{algorithmic}[1]
\Require Feasible partition \(\pi\), objective \(J\), temperature schedule
\Ensure Best partition found during the run
\State Set \(\pi_{\mathrm{best}}\leftarrow\pi\).
\For{each annealing step}
  \State Propose a capacity-preserving move or swap.
  \State Compute \(\Delta=J(\pi')-J(\pi)\).
  \If{\(\Delta\leq0\) or a uniform random draw is below \(\exp(-\Delta/T)\)}
    \State Accept \(\pi'\).
    \If{\(J(\pi)<J(\pi_{\mathrm{best}})\)}
      \State Update \(\pi_{\mathrm{best}}\leftarrow\pi\).
    \EndIf
  \EndIf
  \State Update the temperature.
\EndFor
\State Return \(\pi_{\mathrm{best}}\).
\end{algorithmic}
\end{algorithm}

\subsection{Communication-port selection}

After partitioning, \QuPort{} selects which logical qubits should occupy communication-qubit positions. For a logical qubit \(i\), define its external score
\begin{equation}
s_i=\sum_{j:\pi(j)\neq\pi(i)}w_{ij}.
\end{equation}
The top-k mode selects the highest external-score qubits in each QPU. The diverse mode also considers the remote QPUs contacted by the candidate boundary qubits, so that selected communication qubits are not all focused on the same remote neighbor when alternatives exist.

\begin{algorithm}[htbp]
\caption{Communication-port-aware layout}
\label{alg:layout}
\begin{algorithmic}[1]
\Require Partition \(\pi\), edge weights \(w\), architecture blocks
\Ensure Physical layout \(\ell\)
\For{each logical qubit \(i\)}
  \State Compute external score \(s_i\).
\EndFor
\For{each QPU \(q\)}
  \State Select up to \(P\) logical qubits assigned to \(q\) for communication positions.
  \State Map selected logical qubits to communication physical qubits of \(q\).
  \State Map remaining logical qubits to compute physical qubits of \(q\).
\EndFor
\State Reject incomplete or overflowing layouts.
\end{algorithmic}
\end{algorithm}

\section{Distributed Program Construction}

In distributed mode, \QuPort{} applies the partition-aware layout without allowing a global inter-QPU routing pass to hide remote interactions. The mapped circuit is scanned in instruction order. A one-qubit operation is appended to the local circuit of the QPU that owns the operand. A two-qubit operation is appended locally only if both operands belong to the same QPU. If the operands belong to different QPUs, the operation is recorded as a remote event and synchronization barriers are inserted into the affected local circuits.\\

The resulting program is
\begin{equation}
\mathcal{D}(C)=\left(\{C_q\}_{q\in V_Q},\mathcal{R}\right),
\end{equation}
where \(C_q\) is the local circuit for QPU \(q\), and \(\mathcal{R}\) is the ordered list of remote operations. Each remote operation records the operation name, physical operands, endpoint QPUs, parameters, classical bits, and original instruction index. Local circuits are later routed using only the intra-QPU coupling map for their own QPU.

\begin{algorithm}[htbp]
\caption{Distributed program extraction}
\label{alg:split}
\begin{algorithmic}[1]
\Require Mapped physical circuit, modular architecture
\Ensure Local circuits \(\{C_q\}\) and remote-event list \(\mathcal{R}\)
\State Initialize one local circuit for each QPU.
\For{each instruction in mapped-circuit order}
  \State Determine the physical operands and their owning QPUs.
  \If{the instruction has no quantum operands}
    \State Append it where applicable, or propagate barriers to local circuits.
  \ElsIf{all operands belong to one QPU}
    \State Append the instruction to that QPU's local circuit.
  \ElsIf{the instruction is a two-qubit cross-QPU operation}
    \State Append a remote-operation record to \(\mathcal{R}\).
    \State Insert synchronization barriers on the participating local circuits.
  \Else
    \State Treat the cross-QPU multi-qubit instruction conservatively as a remote composite event.
  \EndIf
\EndFor
\end{algorithmic}
\end{algorithm}

This intermediate representation is deliberately not a claim about a specific physical remote-gate protocol. It exposes where such a protocol must be supplied by a backend.

\section{Scheduling and Abstract Cost Semantics}

\QuPort{} includes a topology-aware estimator for comparing compiler choices under fixed abstract parameters. The estimator uses circuit layers as a dependency-aware approximation. Local one-qubit, two-qubit, and SWAP instructions contribute abstract costs \(\tau_1\), \(\tau_2\), and \(\tau_{\mathrm{swap}}\). For remote operations, the model uses three abstract terms: entanglement generation cost \(\tau_E\), classical round-trip cost \(\tau_C\), and remote-operation overhead \(\tau_R\). If asynchronous classical overlap is enabled, the effective classical term is
\begin{equation}
\tau_C^{\mathrm{eff}}=(1-\rho)\tau_C,
\qquad 0\leq\rho\leq1.
\end{equation}
For a remote operation between QPUs \(a\) and \(b\), the topology-aware remote cost is modeled as
\begin{equation}
\tau_{\mathrm{remote}}(a,b)=d(a,b)\tau_E+\tau_C^{\mathrm{eff}}+\tau_R.
\end{equation}

Within a circuit layer, remote operations are packed greedily into rounds. An operation can be placed in a round only if both endpoint QPUs have available communication ports and each interconnect link on the selected shortest path has remaining link capacity. For switch-like topologies, the estimator can also account for a limit on distinct QPU pairs per round and an optional reconfiguration delay. The output contains makespan, number of layers, number of remote operations, number of remote rounds, peak link utilization, and peak QPU-port usage.

\begin{figure}[htbp]
\centering
\begin{tikzpicture}[
  qpu/.style={circle, draw, minimum size=7mm, inner sep=0pt, font=\scriptsize},
  box/.style={draw, rounded corners, align=center, minimum height=8mm, minimum width=30mm, font=\scriptsize},
  edge/.style={thick},
  used/.style={very thick},
  arr/.style={-{Latex[length=2mm]}, thick}
]
\node[qpu] (a) at (0,0.8) {$a$};
\node[qpu] (b) at (2.0,0.8) {$b$};
\node[qpu] (c) at (4.0,0.8) {$c$};
\node[qpu] (d) at (2.0,-0.8) {$d$};
\draw[used] (a) -- node[above, font=\tiny] {used link} (b);
\draw[used] (b) -- (c);
\draw[edge] (a) -- (d);
\draw[edge] (b) -- (d);
\draw[edge] (c) -- (d);
\node[box] (round) at (7.0,1.0) {remote round};
\node[box, below=6mm of round] (ports) {endpoint port count\(\leq P\)};
\node[box, below=6mm of ports] (links) {path link usage\(\leq\) capacity};
\draw[arr] (c) -- (round);
\end{tikzpicture}
\caption{Remote-round feasibility in the topology-aware estimator. A round must respect endpoint communication-port limits and link-capacity limits along the chosen QPU-network paths.}
\label{fig:scheduler}
\end{figure}

The scalar cost model used in the global compilation path is also abstract:
\begin{equation}
C_{\mathrm{local}}=\tau_1n_1+\tau_2n_2+\tau_{\mathrm{swap}}n_{\mathrm{swap}},
\end{equation}
\begin{equation}
C_{\mathrm{remote}}=n_{\mathrm{remote}}(\tau_E+\tau_C+\tau_R),
\end{equation}
\begin{equation}
C_{\mathrm{total}}=C_{\mathrm{local}}+C_{\mathrm{remote}}+0.1d_C\tau_2,
\end{equation}
where \(d_C\) is circuit depth. These equations are useful only under a fixed set of abstract parameters.

\section{Implementation Properties}

The implementation is organized around the same conceptual separation used in the mathematical model. The configuration module defines the modular architecture and latency parameters. The architecture module builds local and inter-QPU coupling structures. The network module computes QPU graphs, shortest paths, traffic matrices, routed link loads, and congestion metrics. The interaction module extracts logical two-qubit weights. The partitioning module implements heavy-edge clustering, balanced greedy partitioning, TPCCAP, and TPCCAP-SA. The layout module assigns selected boundary logical qubits to communication qubits. The distributed module extracts local circuits and remote events. The scheduler estimates topology-aware remote rounds and makespan. The pipeline and compiler modules connect these pieces to Qiskit transpilation.\\

Several correctness conditions are enforced by validation. Partition inputs must have valid logical indices and finite nonnegative weights. Total logical demand must not exceed total capacity. Layout construction rejects invalid QPU assignments and incomplete physical layouts. QPU shortest-path data are validated before use in TPCCAP. Traffic matrices and link-load maps are checked for shape, symmetry, nonnegative values, and finite entries.

\begin{proposition}
Every partition returned by heavy-edge clustering, balanced greedy partitioning, TPCCAP, or TPCCAP-SA satisfies the per-QPU capacity constraint.
\end{proposition}
\begin{proof}
Heavy-edge clustering merges components only when the merged size does not exceed capacity and places components only into QPUs with sufficient remaining space. Balanced greedy placement considers only non-full QPUs. Its local refinement moves a qubit only to a destination QPU with available capacity. TPCCAP starts from a feasible balanced partition and applies only capacity-preserving moves. TPCCAP-SA proposes only capacity-preserving moves or swaps. Thus the capacity invariant is preserved by every state transition.
\end{proof}

\begin{proposition}
For a connected source-destination QPU pair, the equal-cost multi-path routing routine conserves the injected traffic weight.
\end{proposition}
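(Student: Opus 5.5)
Since the statement concerns an implementation routine, I will first fix a mathematical model of the ECMP routine and then prove conservation for that model. Let \(s\neq t\) be connected in \(G_Q\) with injected weight \(T>0\). Let \(d(\cdot,t)\) be the shortest-path distance, which is finite on the component containing \(s\). The shortest-path DAG \(D\) has an arc \(u\to v\) exactly when \(\{u,v\}\in E_Q\) and \(d(u,t)=d(v,t)+1\). The routine assigns flow as follows: \(f(s)=T\); each node \(u\neq t\) splits its incoming flow \(f(u)\) equally among its DAG successors; and a node's flow is the sum of the shares it receives. The link load \(L_e\) receives the share sent along edge \(e\). If the code instead splits proportionally to the number of shortest paths through each successor, the argument below is unchanged, because all that matters is that the split fractions at each node are nonnegative and sum to one. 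The claim to prove is: the total flow arriving at \(t\) equals \(T\), and every edge used lies on some shortest \(s\)–\(t\) path.

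The proof is an induction over the layers of \(D\). First, connectivity guarantees that every node \(u\neq t\) with \(d(u,t)<\infty\) has at least one successor. Indeed, the neighbour preceding \(u\) on a shortest path to \(t\) satisfies the distance condition. So no flow is ever stranded at a dead end. Next, define layers \(\Lambda_k=\{u: d(u,t)=k\text{ and } u \text{ is reachable from } s \text{ in } D\}\) for \(k=d(s,t),\dots,0\). I will show by downward induction on \(k\) that \(\sum_{u\in\Lambda_k} f(u)=T\).

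The base case is \(\Lambda_{d(s,t)}=\{s\}\). For the inductive step, every arc of \(D\) goes from layer \(k\) to layer \(k-1\), so all flow leaving \(\Lambda_k\) enters \(\Lambda_{k-1}\). Each node splits its flow with fractions summing to one, so nothing is created or lost. Conversely, every node of \(\Lambda_{k-1}\) receives flow only from \(\Lambda_k\). Hence the layer sums are equal. At \(k=0\) we get \(\Lambda_0=\{t\}\), so \(f(t)=T\).

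As a corollary, flow is conserved at every intermediate node, meaning inflow equals outflow. In addition, the loads on the edges of any \(s\)–\(t\) cut in \(D\) that separates consecutive layers sum to exactly \(T\). This is the form in which the congestion term \(\sum_e L_e^2\) uses the result.

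The main obstacle is matching this model to the actual code. There are two specific concerns. First, floating-point splitting yields conservation only up to rounding, so I would state the result in exact arithmetic and remark that the numerical error is bounded by the number of arcs times machine epsilon. Second, the routine might be implemented as a sum over enumerated shortest paths with weight \(T/\#\text{paths}\) each, rather than as layered splitting. In that case the proof is even simpler: each path carries its share from \(s\) to \(t\), and there are \(\#\text{paths}\geq 1\) such paths by connectivity, so the total delivered is \(\#\text{paths}\cdot T/\#\text{paths}=T\). I would check the repository to see which form is implemented and present the matching argument. The disconnected case is excluded by hypothesis; it corresponds to the penalty branch described after Eq.~\eqref{eq:tpccap-objective}.
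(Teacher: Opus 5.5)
Your argument is correct and is essentially the paper's: flow is conserved layer by layer on the shortest-path DAG because the split fractions at each vertex are nonnegative and sum to one. The paper runs the accumulation backward from the destination with shares \(\sigma(u)/\sigma(v)\), using path counts from the source; these shares sum to one because of the recurrence \(\sigma(v)=\sum_{u}\sigma(u)\) over shortest-path predecessors, and that recurrence is the one check your remark about proportional splitting leaves implicit.
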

\begin{proof}
The routine constructs the shortest-path directed acyclic graph for the source and destination. Let \(\sigma(v)\) be the number of shortest paths from the source to vertex \(v\). During backward accumulation, flow at vertex \(v\) is split among predecessor vertices in proportion to \(\sigma(u)/\sigma(v)\). Since \(\sigma(v)\) is the sum of \(\sigma(u)\) over all shortest-path predecessors, the outgoing shares from \(v\) sum to the incoming flow at \(v\). Applying this argument layer by layer from the destination to the source proves that the total returned to the source equals the injected flow.
\end{proof}

\section{Compiler Semantics and Limitations}

The framework should be interpreted as a compiler framework for modular-mapping studies under abstract latency assumptions. It does not provide a calibrated backend for a trapped-ion, superconducting, photonic, or neutral-atom platform. It also does not implement the physical operation that realizes a remote event. The remote-event list is an intermediate representation that identifies where such an operation would be required.\\

The algorithms are heuristic. Heavy-edge clustering, balanced greedy partitioning, TPCCAP, and TPCCAP-SA are designed to expose and reduce compiler-level resource pressure; they do not guarantee optimal graph partitions. The schedule estimator is a layer-based greedy model; it is not a verified network-control scheduler. These boundaries are part of the artifact's semantics and are necessary for correct interpretation of its outputs.

\section{Conclusion}

\QuPort{} provides a concrete framework for modular quantum compilation in which logical placement, communication-port assignment, interconnect topology, congestion, remote-event extraction, and local-only routing are represented explicitly. Its central contribution is not a hardware-specific remote-gate implementation, but a compiler abstraction that keeps local routing separate from QPU-level communication. TPCCAP uses this abstraction to optimize weighted cut distance, communication-port overflow, and routed congestion. The distributed compilation path turns cross-QPU gates into ordered remote events while preserving local circuits for Qiskit routing. This makes \QuPort{} suitable for studying the algorithmic structure of modular compilation before committing to a specific physical interconnect protocol.

\bibliographystyle{splncs04}
\bibliography{references}

\appendix
\renewcommand{\theHsection}{appendix.\arabic{section}}

\section{Notation}

\begin{table}[htbp]
\centering
\caption{Notation used in the manuscript.}
\label{tab:notation}
\begin{tabular}{@{}ll@{}}
\toprule
Symbol & Meaning \\
\midrule
\(Q_L\) & logical-qubit set \\
\(G_L\) & weighted logical interaction graph \\
\(G_Q\) & undirected QPU-level interconnect graph \\
\(C\) & compute qubits per QPU \\
\(P\) & communication qubits per QPU \\
\(B=C+P\) & physical block size of one QPU \\
\(\pi\) & logical-to-QPU partition \\
\(\ell\) & logical-to-physical layout \\
\(T_{ab}\) & traffic between QPUs \(a\) and \(b\) \\
\(L_e\) & routed load on QPU-network edge \(e\) \\
\(J(\pi)\) & TPCCAP objective \\
\bottomrule
\end{tabular}
\end{table}

\end{document}